\newcommand{\fig}[4]{ \begin{figure}[#4]
  \centering
   \includegraphics[width=#3\textwidth]{#1}
   \caption{#2}\label{fig:#1}
  \end{figure}
}
\newtheorem{definition}{Definition}
\newtheorem{prop}{Proposition}
\title{Optimized Rate-Adaptive Protograph-Based LDPC Codes for Source Coding with Side Information}
\author{Fangping Ye$^1$, Elsa Dupraz$^1$, Zeina Mheich$^2$, Karine Amis$^1$\\
\small $^1$ IMT Atlantique, Lab-STICC, UBL, 29238 Brest, France \\ $^2$ University of Surrey, United Kingdom
\thanks{Parts of the materials of Sections II.C and III of this manuscript were published in WCNC 2018~\cite{mheich18WCNC} and ICT 2018~\cite{Ye18ICT}. In introduction, we describe the differences between this work and the results published in~\cite{mheich18WCNC,Ye18ICT}. }}
\begin{document}

\maketitle

% Abstract
\begin{abstract}
This paper considers the problem of source coding with side information at the decoder, also called Slepian-Wolf source coding scheme. In practical applications of this coding scheme, the statistical relation between the source and the side information can vary from one data transmission to another, and there is a need to adapt the coding rate depending on the current statistical relation. In this paper, we propose a novel rate-adaptive code construction based on LDPC codes for the Slepian-Wolf source coding scheme. The proposed code design method allows to optimize the code degree distributions at all the considered rates, while minimizing the amount of short cycles in the parity check matrices at all rates. Simulation results show that the proposed method greatly reduces the source coding rate compared to the standard LDPCA solution.
\end{abstract}

% % Introduction
\section{Introduction}
This paper considers the problem of lossless source coding with side information at the decoder, also called Slepian-Wolf source coding~\cite{SW}.
In this scheme, the objective is to reduce the coding rate of the source $X$ by exploiting the side information $Y$ available at the decoder.
This problem has regained attention recently due to its use in many modern multimedia applications like Distributed Source Coding (DSC)~\cite{xiong2004distributed}, Free-Viewpoint Television (FTV)~\cite{FTV,tanimoto2010free}, or Massive Random Access (MRA)~\cite{MRA}. 
For instance, in DSC, the source $X$ represents the data sent by one sensor, and the side information $Y$ is the data already transmitted by other adjacent sensors. 
As another example, FTV is a video system in which the user freely switches from one view to another; in a soccer game, he may decide to follow a player or to focus on the goal. 
In FTV, $X$ is the current requested view, and $Y$ represents the views previously received by the user.
%  \textcolor{blue}{The objective of the aforementioned applications is to reduce the redundancy of source data to get the storage more efficiently and also decrease the transmitting rate from the sending server}
A key aspect of the aforementioned applications is that the source $X$ and the side information $Y$ are correlated. This allows to reduce the source coding rate to $H(X|Y)$ bits/source symbols~\cite{SW} instead of $H(X)$ when no side information is available.

Practical Slepian-Wolf source coding schemes can be constructed from error-correction codes such as Low Density Parity Check (LDPC) codes~\cite{1042242}. LDPC codes were first invented by Gallager~\cite{gallager1962low} in the context of channel coding. 
For very long codewords (more than $10000$ bits), they are known to approach Shannon capacity in channel coding, and to achieve a coding rate close to the conditional entropy in Slepian-Wolf source coding~\cite{chen2009duality}.
For shorter codewords, carefully designed LDPC codes show a reasonable loss in performance compared to Shannon capacity or conditional entropy~\cite{polyanskiy2010channel}.
% \textcolor{blue}{However, in a context such as sensor networks, only a small amount of measurements of e.g. temperatures or pressures needs to be transmitted (less than $1000$ bits).
% In order to be able to use very long codes, one could consider grouping several successive frames of data, which may induce an important latency. 
% As an alternative, here, we consider short codes that match the measurement vector length, at the price of a reasonable loss in performance~\cite{polyanskiy2010channel}}.

The performance of an LDPC code depends on its degree distribution, that gives the amount of non-zero values in the code parity check matrix. The code degree distribution can be described either in a polynomial form~\cite{richardson2001design} or by use of a small graph called protograph~\cite{thorpe2003low}.
In this paper, we consider the protograph description as it allows the design of capacity-approaching LDPC codes in a simpler way than polynomial degree distributions~\cite{divsalar2009capacity}. 
At short to medium length, the code performance can also be lowered by short cycles in the code parity check matrix~\cite{han2014construction}.
Therefore, the LDPC code construction is usually realized in two steps. The first step optimizes the protograph for good decoding performance~\cite{richardson2001design,dupraz15Com,542711}. The second step constructs the parity check matrix from the selected protograph by applying a PEG algorithm~\cite{1377521,jiang2014efficient,healy2016design} that lowers the amount of short cycles in the parity check matrix. 
% The construction of non-binary LDPC codes is realized by first constructing a binary LDPC code according to the above method, and by then assigning non-binary coefficients to all the edges in the code.
% The non-binary coefficients are chosen so as to maximize the code minimum distance and to further reduce the amount of short cycles in the code~\cite{poulliat2008design}.

This standard LDPC code construction is adapted to Slepian-Wolf source coding problems with a fixed statistical relation $P(Y|X)$ between the source $X$ and the side information $Y$. 
However, when this statistical relation varies from one data frame to another, such codes may suffer from either rate loss or decoding failure. 
For instance, in FTV, this statistical relation often changes since different sets of views can be available at the decoder, depending on the successive requests of the users. 
This issue can be solved by rate-adaptive LDPC codes that allow to adapt the rate depending on the current statistical relation between $X$ and $Y$.
In channel coding, standard solutions to construct rate-adaptive LDPC codes are puncturing~\cite{ha2004rate} and parity check matrix extension~\cite{yazdani2004construction,van2012design}.
However, in source coding, puncturing leads to a poor decoding performance~\cite{varodayan2006rate} and parity check matrix extension cannot be applied, as it would require to artificially increase the source sequence length. %, rather than the syndrome and codeword sequences in channel coding. 

In source coding, standard methods to construct rate-adaptive LDPC codes are Rateless codes~\cite{eckford2005rateless,yu2013improved} and Low Density Parity Check Accumulated (LDPCA) codes~\cite{varodayan2006rate}.
Rateless codes start from a low rate code and construct higher rate codes by transmitting a part of the source bits.  
However, the main issue of this method is that it is difficult to construct good low rate LDPC codes~\cite{yang2004design}.
LDPCA takes the opposite approach by starting from a high rate code. Lower rates are then obtained by puncturing accumulated syndrome bits rather than puncturing the syndrome bits directly.
Unfortunately, the LDPCA accumulator has a fixed regular structure, and it cannot be optimized in order to obtain good degree distributions or to avoid short cycles that could degrade the code performance at lower rates.   
In order to improve the LDPCA construction, it was proposed in~\cite{cen2009design} to consider a non-regular accumulator optimized for any rate of interest. 
The method of~\cite{cen2009design} optimizes the non-regular accumulator for codes with large length, but it does propose any finite-length code construction that would allow to reduce the amount of short cycles in the low-rate codes.
As an intermediate solution,~\cite{kasai2010rate} starts with an initial rate $1/2$ and applies Rateless codes construction to increase the rate and LDPCA codes construction to decrease the rate.  
This permits to avoid the main issue of Rateless codes, but the shortage of LDPCA codes still exists.

In this paper, we consider the intermediate solution of~\cite{kasai2010rate} and we replace the LDPCA part by an alternative rate-adaptive LDPC code construction that was initially introduced in~\cite{Ye18ICT,mheich18WCNC}. %and considered in~\cite{Ye18ICT}. %, and it is adapted to both binary and non-binary codes. 
This construction replaces the LDPCA accumulator by intermediate graphs that combine the syndrome bits in order to obtain lower rate codes. 
In this construction, the intermediate graphs must be full-rank in order to allow the code to be rate-adaptive.

The code design method initially proposed in~\cite{mheich18WCNC,Ye18ICT} only consider unstructured finite-length code constructions, that is without design of the degree distributions of the lower rate codes.
Here, we introduce a novel design method that allows to select the photographs of the intermediate graphs so as to optimize the decoding performance of all the codes constructed at all rates of interest.
We also propose a new algorithm called Proto-Circle that constructs the intermediate graphs according to their protographs, while minimizing the amount of short cycles in the codes at all the considered rates. 
In addition, the rate-adaptive construction of~\cite{mheich18WCNC,Ye18ICT} only permits to consider a small number of rates, in the order of magnitude of the size of the protograph. 
The code design method we propose in this paper permits to obtain a much larger number of rates, in the order of magnitude of the size of the parity check matrix. 
% \textcolor{blue}{Further, compared to~\cite{mheich18WCNC,Ye18ICT}, we explain how to consider any rate, rather than a small number of rates. }
% In the non-binary case, we also explain how to select the non-binary edge coefficients in the intermediate graph.
Our simulation results show that the proposed rate-adaptive LDPC code construction provides improved performance compared to LDPCA. This comes from the careful protograph selection and from the fact that our method reduces the number of short cycles in the constructed codes at all rates.

The outline of this paper is as follows. Section~\ref{sec:problem} presents the Slepian-Wolf source coding problem. Section~\ref{sec:LDPC_codes} describes existing rate-adaptive LDPC code construction for source coding. Section~\ref{sec:ex_ra_const} restates the rate-adaptive construction of~\cite{mheich18WCNC,Ye18ICT}. Section~\ref{sec:int_matrix} introduces our code design method by describing the protograph optimization and the Proto-Circle algorithm. Section~\ref{sec:several_rates} presents our solution to consider an increased number of rates. To finish, Section~\ref{sec:simu} shows our simulation results. 

% % SW
\section{Source coding with side information at the decoder}\label{sec:problem}
\begin{figure*}[t]
\begin{center}
   \subfigure[~]{  \includegraphics[width=.4\linewidth]{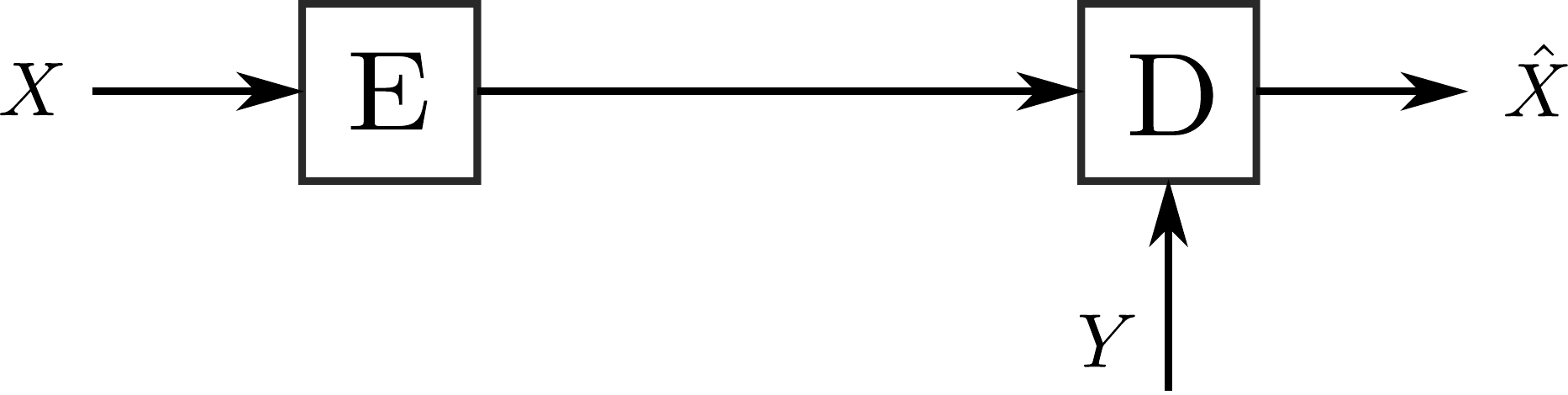}}
   \subfigure[~]{  \includegraphics[width=.4\linewidth]{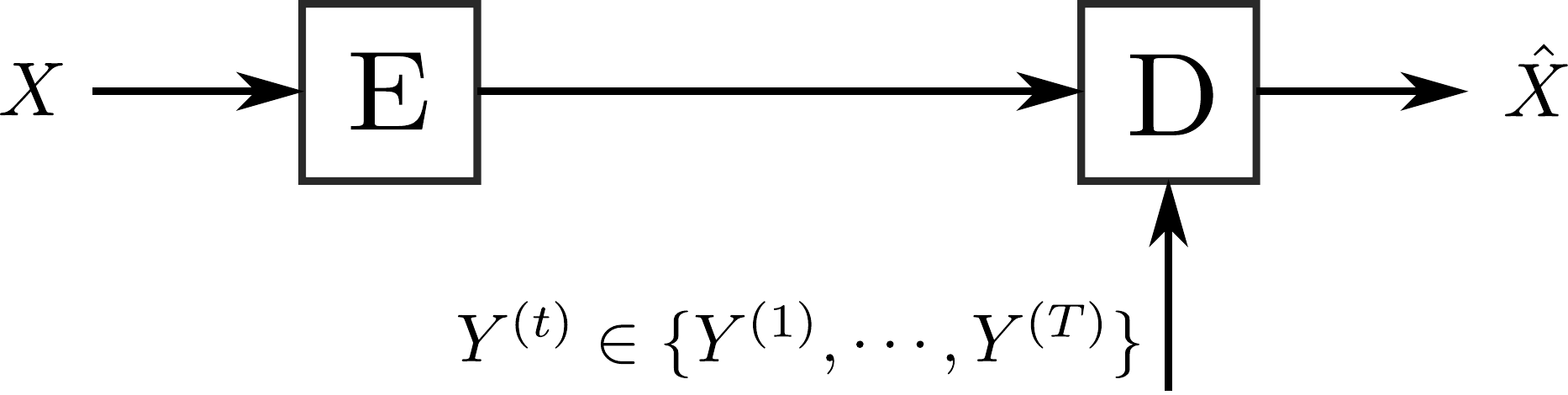}}
\end{center}
\caption{(a) Slepian-Wolf source coding, (b) Source coding with several possible side informations at the decoder }
\label{fig:sc_si}
\end{figure*}

This section describes the lossless coding of a source $X$ with side information $Y$ available at the decoder. 
The source $X$ generates independent and identically distributed (i.i.d.) symbols $X_1,\cdots, X_n$. 
For all $i \in \{1,\cdots, n\}$, we assume that the source symbol $X_i$ takes values in a binary alphabet $\{0,1\}$. %If $X$ is a binary source, then $q=1$. 
The probability mass function of the source $X$ is denoted by $P(X)$.
The source bits $X_i$ may represent the pixels of an image or quantized sensor measurements. 
% Here, we assume that these symbols are discrete since we consider lossless source coding of $X$. 

In the original SW source coding scheme~\cite{SW} depicted in Figure~\ref{fig:sc_si} (a), a side information $Y$ is available at the decoder and helps the reconstruction of $X$. The side information $Y$ generates i.i.d. symbols $Y_1,\cdots, Y_n$ and the symbols $Y_i$ belong to an alphabet $\mathcal{Y}$ that can be either discrete or continuous. 
With a slight abuse of notation, we denote by $P(Y|X)$ the correlation channel between $X$ and $Y$. If $\mathcal{Y}$ is discrete, $P(Y|X)$ is a conditional probability mass function. If  $Y$ is continuous, $P(Y|X)$ is a conditional density.
For instance, if the correlation channel between $X$ and $Y$ is a Binary Symmetric Channel (BSC), then $\mathcal{Y} = \{0,1\}$, $P(Y=1|X=0) = P(Y=0|X=1) = p$, and $p$ is the crossover probability of the BSC.
According to~\cite{SW}, the minimum achievable rate for lossless SW source coding is given by $R = H(X|Y)$ bits/source symbol.
The side information $Y$  reduces the coding rate, since $H(X|Y) \leq H(X)$.

Alternatively, in this paper, we consider the problem of source coding with several possible side informations at the decoder~\cite{sgarro1977source}, see Figure~\ref{fig:sc_si} (b).
In this setup, the decoder has access to one side information $Y^{(t)}$ that belongs to a set of $T$ possible side informations $\{Y^{(1)}, \cdots Y^{(T)}\}$. 
Each source $Y^{(t)} \in \{Y^{(1)}, \cdots Y^{(T)}\}$ generates a sequence of $n$ i.i.d symbols $Y_1^{(t)},\cdots, Y_n^{(t)}$, but the decoder only observes one of these length-$n$ sequences. 
Whatever $t \in \{1,\cdots, T\}$, the side information symbols $Y_i^{(t)}$ belong to the same alphabet $\mathcal{Y}$. However, each of the side informations $Y^{(t)}$ corresponds to a different correlation channel described by the conditional probability distribution $P(Y^{(t)}|X)$. 
For instance, each correlation channel $P(Y^{(t)}|X)$ may correspond to a BSC with a different crossover probability $p_t$.

For the setup of Figure~\ref{fig:sc_si} (b), the minimum achievable rate can be evaluated in two different ways. 
If the encoder does not know the index $t$ of the side information available at the decoder, then the minimum achievable rate is given by $R = \displaystyle \max_{t=1,\cdots, T} H(X|Y^{(t)})$~\cite{sgarro1977source,Draper07}.
Otherwise, if the encoder has access to index $t$ by means of e.g. a feedback channel, the minimum achievable rate depends on $t$ and it is given by $R_t = H(X|Y^{(t)})$~\cite{Draper04,yang10IT}. 

In addition,~\cite{MRA} describes the rates $R_t$ as transmission rates from the server (encoder part) to the users (decoder part) and proposes to also take into account the storage rate $Q$ on the server. 
Indeed, in order to achieve transmission rates $R_t$, one could consider storing one different codeword per possible side information $Y^{(t)}$, which would give $Q = \sum_{t=1}^T H(X|Y^{(t)})$ bits/source symbol.
However,~\cite{MRA} proposes an information-theoretic code construction that allows to construct one single incremental codeword at storage rate $Q =  \displaystyle \max_{t=1,\cdots, T} H(X|Y^{(t)}) \leq \sum_{t=1}^T H(X|Y^{(t)})$ from which we can extract $T$ subcodewords with rates $R_t = H(X|Y^{(t)})$ depending on the side information $Y^{(t)}$ available at the decoder.
In this paper, we propose a practical rate-adaptive coding scheme based on LDPC codes which provides such incremental codeword construction.

% % LDPC
\section{LDPC codes for source coding with side information}\label{sec:LDPC_codes}
For the two source coding problems described in Section~\ref{sec:problem}, LDPC codes provide efficient practical coding schemes that perform close to the conditional entropies $H(X|Y)$ or $H(X|Y^{(t)})$ when the codeword length $n$ is large~\cite{chen2009duality}. 
In this section, we first consider the case with one possible side information $Y$ available at the decoder, and we describe the standard construction of LDPC codes with good performance for the Slepian-Wolf source coding scheme. 
We then review existing rate-adaptive LDPC-based solutions that provide practical source coding schemes when side information $Y^{(t)}$ available at the decoder comes from a set  $\{Y^{(1)}, \cdots Y^{(T)}\}$.% at the decoder. 

\subsection{LDPC codes for Slepian-Wolf source coding}
In this section, we consider the Slepian-Wolf source coding setup in which one single side information $Y$ is available at the decoder. 
% For this setup, LDPC codes allow to achieve a coding rate close to the theoretical limit $H\left(X | Y\right)$ bits/source symbol~\cite{1042242,dupraz15Com}.
% The source $X$ can be either binary or non-binary, and this is why this work considers both binary~\cite{1042242} and non-binary~\cite{dupraz15Com} LDPC codes.
% 
Let $\underline{x}^{n}=(x_{1},x_{2},\cdots x_{n})^{T}$ stand for a source vector of length $n$ to be transmitted to the decoder. Consider an LDPC parity check matrix $H$ of size $m\times n$ $(m<n)$ and coding rate $R=m/n$. The matrix $H$ is sparse and its non-zero components are all equal to $1$. %In particular, if $X$ is a binary source, the non-zero components of $H$ are all equal to $1$. 
The codeword or syndrome $\underline{c}^{m}=\left(c_{1},c_{2},\cdots c_{m}\right)^{T}$ that is transmitted to the decoder is calculated from $\underline{x}^n$ and $H$ as~\cite{1042242}
  \begin{equation}\label{eq:symdrom_comp}
  \underline{c}^{m}=H \underline{x}^{n} .
  \end{equation}
  In the binary matrix multiplication of~\eqref{eq:symdrom_comp}, additions correspond to XOR operations and multiplications correspond to AND operations.
Once it receives syndrom $\underline{c}^{m}$, the decoder produces an estimate $\hat{\underline{x}}^n$ of $\underline{x}^{n}$ by applying the Belief Propagation algorithm (BP) to $\underline{c}^{m}$ and the side information vector $\underline{y}^n$~\cite{1042242,dupraz15Com}.

The parity check matrix $H$ can be represented by a Tanner Graph that connects $n$ Variable Nodes (VN) $X_1,\cdots,X_n$ with $m$ Check Nodes (CN) $C_1,\cdots,C_m$. There is an edge between a VN $X_i$ and a CN $C_j$ if there is a non-zero value at the corresponding matrix position $H_{i,j}$. %In the case of non-binary LDPC codes, the edge between $X_i$ and $C_j$ is labeled by the non-zero value $H_{i,j}$. 
The decoding performance of LDPC codes highly depends on the choice of the parity check matrix $H$, as we now describe.

\subsection{LDPC code construction} \label{subsec:code_construction}
% In this paper, we consider both binary and non-binary LDPC codes.
% The design of non-binary LDPC code, starts by constructing a binary parity check matrix $H$~\cite{poulliat2008design}, as we now describe.
% This section describes the construction of a binary parity check matrix $H$.
A parity check matrix $H$ can be constructed from a code degree distribution described by a protograph~\cite{thorpe2003low}. 
%Protographs allow for a precise control of the connections in the parity check matrix of the code, and lead to efficient Quasy-Cyclic (QC) parallel hardware implementations~\cite{mitchell2014quasi}. % For these reasons, here, we consider code construction from protographs.
A protograph $\mathcal{S}$ is a small Tanner Graph of size $S_{m}\times S_{n}$ with $S_m/S_n=m/n = R$. Each row (respectively column) of $\mathcal{S}$ represents a type of CN (respectively of VN). The protograph $\mathcal{S}$ thus describes the number of connections between $S_{n}$ different types of VNs and $S_{m}$ different types of CNs. A parity check matrix $H$ can be generated from a protograph $\mathcal{S}$ by repeating the protograph structure $Z$ times such that $n = Z S_n $, and by interleaving the connections between the VNs and the CNs of the corresponding types. The interleaving is realized so as to obtain a connected Tanner graph that satisfies the number of connections defined by the protograph. 
It can be done by a PEG algorithm~\cite{1377521} that not only permits to satisfy the protograph constraints, but also to lower the number of short cycles that could severely degrade the decoding performance of the matrix $H$.

\fig{protograph}{Construction of a parity check matrix $H$ of size  $2 \times4$ (right picture) from a protograph $\mathcal{S}$ of size $1 \times2$ (left picture)}{0.75}{t}

We now give an example of construction of a parity check matrix $H$ from the protograph
\begin{equation}
\mathcal{S} = \begin{bmatrix} 1 & 2 \end{bmatrix} 
\end{equation}
that represents the connections between one CN of type $A_1$ and two VNs of types $B_1$ and $B_2$. 
The Tanner graph of this protograph is represented in Figure~\ref{fig:protograph} (left part). In order to construct a parity check matrix $H$, the protograph is first duplicated $Z=2$ times (middle part of Figure~\ref{fig:protograph}), and the edges are then interleaved (right part of Figure~\ref{fig:protograph}).
In the final Tanner graph, one can verify that each VN of type $B_1$ is connected to one CN of type $A_1$, and each VN of type $B_2$ is connected to two CNs of type $A_1$.
% In the end, the parity check matrix $H$ is given by  
% \begin{equation}
% H = \begin{bmatrix} 0 & 1 & 1 & 1 \\ 1 & 1 & 0 & 1 \end{bmatrix} .
% \end{equation}

The performance of a given parity check matrix $H$ highly depends on its underlying protograph $\mathcal{S}$. 
Density Evolution~\cite{910577} permits to evaluate the theoretical threshold of a protograph under asymptotic conditions. The threshold is the worst correlation channel parameter that allows for a decoding error probability $P_e=0$, given that the codeword length tends to infinity. It can thus be used as an optimization criterion to select the protograph.
For a given coding rate $R = m/n$, Differential Evolution~\cite{542711} is an optimization method that permits to find a protograph with a very good threshold.  

In order to optimize the protograph, we fix the size $S_m \times S_n$ of the protograph and impose a maximum degree value $d_{\max}$ for the protograph coefficients.
Differential Evolution is a genetic algorithm that starts with an initial population of $V$ protographs and recombines the elements of the population in order to get new protographs. 
The protographs with the best thresholds (evaluated with density evolution) are then retained in order to form a new population. 
This process is repeated over $L$ iterations. 
The recombination operation proposed in~\cite{542711} stands for real vectors, while the protograph coefficients are discrete. 
In our optimization, we keep the recombination operation of~\cite{542711} and simply round each obtained real value to the closest integer.

\subsection{Rate-adaptive LDPC codes}
In the above LDPC code construction, the coding rate $R$ is fixed once for all. %, which is well-suited for the case where one side information $Y$ is available at the decoder. 
But if the available side information $Y^{(t)}$ comes from a set of possible side informations $\{Y^{(1)},\cdots, Y^{(T)}\}$, sending the data at fixed rate $R$ will cause either a rate loss or a decoding failure. 
% In order to avoid this situation, 
This is why we now describe rate-adaptive LDPC code constructions that allow to adapt the coding rate depending on the side information $Y^{(t)} $ available at the decoder.

The rate-adaptive Rateless scheme~\cite{eckford2005rateless,yu2013improved} starts by constructing an initial low-rate LDPC code. If a higher rate is needed, a part of the source bits $\underline{x}^{n}$ will be sent in addition to the syndrome $\underline{c}^{m}$. However, the major drawback of the rateless scheme is that it difficult to construct good low rate LDPC codes~\cite{yang2004design,shin2012design}\footnote{Low rate LDPC codes for source coding correspond to high rate LDPC codes for channel coding}, and a bad initial low rate code will cause poor performance at any considered higher rate. Therefore, it is not desirable to apply the Rateless construction from very low rates.

On the opposite, the LDPCA scheme~\cite{varodayan2006rate} starts from a high-rate LDPC code.
It then computes new accumulated symbols $\underline{a}^m = [a_{1},a_{2},\cdots,a_{m}]^T$ from the syndrome $\underline{c}^m$~\eqref{eq:symdrom_comp}  as
\begin{align}\notag
a_{1} & =c_{1},\\ 
a_{i} &=a_{i-1}+c_{i}, ~~ \forall i=\left\{ 2,\cdots,m\right\} , \label{eq:acc_structure}
\end{align}
where the binary sum in~\eqref{eq:acc_structure} correspond to XOR operations.
 If a lower rate is demanded, only a part of the symbols $\left(a_{1},a_{2,}\cdots,a_{m}\right)$ will be sent. For instance, if the original rate is $R$ and a rate $R/2$ is demanded, only the even symbols $a_2, a_4, a_6, \cdots$ will be transmitted.  
 The decoder will then compute all the differences $a_{i} - a_{i-2} = c_{i} + c_{i-1}$, before applying a BP decoder in order to estimate the source vector $\mathbf{x}^n$ from all the obtained XOR sums $c_{i} + c_{i-1}$. 
 In this construction, puncturing the source symbols $a_i$ rather than the syndrome bits $c_i$ was shown to better preserve the code structure and to greatly improve the decoding performance~\cite{varodayan2006rate}.  
 However, in the LDPCA construction, the accumulator structure~\eqref{eq:acc_structure} is fixed and does not allow for an optimization of the combinations of syndrome symbols $c_i$ that are used by the decoder. The accumulator structure may in particular induce short cycles in the lowest rates and eliminate some source bits from the CN constraints. 
In~\cite{cen2009design}, the LDPCA structure is improved by considering a non-regular accumulator. The non-regular accumulator is designed for any rate of interest by optimizing its polynomial degree distribution under asymptotic conditions. Unfortunately,~\cite{cen2009design} does not propose any finite-length code construction that could solve the short cycles and VN elimination issues.
 % In order to improve the binary LDPCA construction, it was proposed in~\cite{yu2013improved} to consider a non-regular accumulator optimized for any rate of interest. 
% The optimization method proposed in~\cite{yu2013improved} is based on a density evolution in order to evaluate the theoretical thresholds at any rate of interest.
% In addition, in our context, the density evolution analysis is not sufficient to predict accurately the performance of short-length codes which can be degraded for example by short cycles in the Tanner graph. To finish, the above work~\cite{yu2013improved} does not consider non-binary LDPC codes. 
 
Due to the drawbacks of Rateless and LDPCA schemes, an intermediate solution was proposed in \cite{kasai2010rate}. It first constructs an initial code of rate $R=1/2$. It then applies either the LDPCA method to obtain rates lower than $1/2$ or the Rateless method for rates higher than $1/2$. In this way, the shortage of the Rateless construction can be avoided, but the drawbacks of LDPCA remain. %The method of~\cite{kasai2010rate} however straightforwardly applies the LDPCA construction of~\cite{varodayan2006rate} without optimizing the accumulator, which lowers its performance.  As a result, the LDPCA structure still suffers from the same cycle issue and from the source bit elimination problem. %, and it needs to be improved.
In this paper, we thus propose a novel rate-adaptive construction that replaces the LDPCA part in the solution of~\cite{kasai2010rate}.
Our rate-adaptive code design method is based both on an asymptotic performance analysis and on a finite length code construction that permits to avoid short cycles. It is thus well adapted to the construction of short length LDPC codes. 

% method proposed
% \section{Rate-Adaptive Code Construction}\label{sec:method_binary}

\section{Rate-adaptive code construction}\label{sec:ex_ra_const}
The rate-adaptive code design method we propose in this paper is based on a rate-adaptive code structure initially proposed in~\cite{Ye18ICT,mheich18WCNC}.
For the sake of clarity, this section describes the rate-adaptive code structure of~\cite{Ye18ICT,mheich18WCNC}.
This construction starts from a mother code of the highest rate and then builds a sequence of daughter codes of lower rates.
This section only describes the construction of one code of rate $R_2$ from a code of rate $R_1 > R_2$. 
This construction is generalized to more rates later in the paper.

\subsection{Rate-adaptive code construction}\label{sec:mother_daughter}
% The rate-adaptive construction considered in this paper was initially introduced in~\cite{mheich18WCNC}. We recall it here for the sake of clarity.
In the rate-adaptive construction of~\cite{Ye18ICT,mheich18WCNC}, the mother code is described by a parity check matrix $H_1$ of size $m_1\times n$ with coding rate $R_1=m_1/n$.
The Tanner graph $\mathcal{T}_1$ connects the $n$ VNs $\mathcal{X}=\{X_1,\cdots, X_n\}$ to $m_1$ CNs $\mathcal{C} = \{C_1,\cdots, C_{m_1}\}$.
The matrix $H_1$ is constructed from a protograph $\mathcal{S}_1$ according to the code design method described in Section~\ref{sec:LDPC_codes}. 
From the mother matrix $H_1$, we want to construct a daughter matrix $H_2$ of size $m_2\times n$, with $m_2 < m_1$, and rate $R_2 = m_2/n < R_1$.
The Tanner graph $\mathcal{T}_2$ will connect the $n$ VNs $\mathcal{X}$ to $m_2$ CNs $\mathcal{U} = \{U_1,\cdots,U_{m_2} \}$.

\fig{graphes}{The left part of the figure shows the combination of $\mathcal{T}_1$  with $\mathcal{T}_{1 \rightarrow 2}$. The right part of the figure shows the resulting $\mathcal{T}_2$. Here, the matrix $H_{1\rightarrow 2}$ is full rank, and one may choose between $\mathcal{C'} = \{ c_1,c_2 \}$, $\mathcal{C'} = \{ c_3,c_4 \}$, $\mathcal{C'} = \{ c_1,c_4 \}$, or $\mathcal{C'} = \{ c_2,c_3 \}$. }{0.4}{t}

In the considered construction, the daughter matrix $H_2$ and the mother matrix $H_1$ are linked by an intermediate matrix  $H_{1\rightarrow 2}$ of size $m_2 \times m_1$ such that
 \begin{equation}\label{eq:defH2}
  H_2 = H_{1\rightarrow 2} H_1 .
 \end{equation}
The Tanner graph $\mathcal{T}_{1 \rightarrow 2}$ of $H_{1\rightarrow 2}$ connects the $m_1$ CNs $\mathcal{C}$ of $\mathcal{T}_1$ to the $m_2$ CNs $\mathcal{U}$ of $\mathcal{T}_2$.
Figure~\ref{fig:graphes} shows an example of the construction of $\mathcal{T}_2$ from $\mathcal{T}_1 $ and $\mathcal{T}_{1 \rightarrow 2}$.
 Note that LDPCA codes can be seen as a particular case of this construction. 
 The intermediate matrix $H_{1\rightarrow 2}$ should be chosen not only to give a good decoding performance for $H_2$, but also to allow $H_1$ and $H_2$ to be rate-adaptive in a sense we now describe.

\subsection{Rate adaptive condition}\label{sec:ra_cond}
 In the construction of~\cite{Ye18ICT,mheich18WCNC}, the following transmission rules are set in order to allow $H_1$ and $H_2$ related by~\eqref{eq:defH2} to be rate-adaptive. 
In order to get a rate $R_2$, we simply transmit all the syndrome values $\underline{u}^{m_2}$, which corresponds to $m_2$ equations defined by the set $\mathcal{U}$. The decoding is then realized with the matrix $H_2$. 
In order to get a rate $R_1$, we transmit all syndrome values in $\underline{u}^{m_2}$ but also a subset $\mathcal{C}' \subseteq \mathcal{C}$ of size $m_1 - m_2$ of the values in $\underline{c}^{m_1}$. This guarantees that the code construction is incremental and that the storage rate is given by $R_1 =\max(R_1,R_2) < R_1 + R_2$.
However, in order to use the matrix $H_1$ for decoding, the receiver must be able to recover the full syndrome $\underline{c}^{m_1}$ from $\underline{u}^{m_2}$ and $\mathcal{C'}$. 
The code that results from the choice of ($H_1$, $H_{1\rightarrow 2}$, $\mathcal{C'}$) is thus said to be rate-adaptive if is satisfies the following condition.

 \begin{definition}[\cite{Ye18ICT,mheich18WCNC}]\label{def:rc_cond}
  The sets $\mathcal{U}$ and $\mathcal{C'}$ define a system of $m_1$ equations with $m_1$ unknown variables $\mathcal{C}$. If this system has a unique solution, then the triplet ($H_1$, $H_{1\rightarrow 2}$, $\mathcal{C'}$) is said to be a rate-adaptive code. 
 \end{definition}
 
 %This definition is equivalent to the $1$-SR condition formalized in~\cite{ha2006rate} and considered in~\cite{pfletschinger2014adaptive} in the context of channel coding.  
 %Note that the standard LDPCA construction satisfies the rate-adaptive condition. 
The following proposition gives a simple condition that permits to verify whether a given intermediate matrix $H_{1\rightarrow 2}$ gives a rate-adaptive code.

\begin{prop}[\cite{Ye18ICT,mheich18WCNC}]\label{prop:fullrank}
  If the matrix $H_{1\rightarrow 2}$ is full rank, then there exists a set $\mathcal{C'} \subseteq \mathcal{C}$ of size $m_1 - m_2$ such that ($H_1$, $H_{1\rightarrow 2}$, $\mathcal{C'}$) is a rate-adaptive code. 
 \end{prop}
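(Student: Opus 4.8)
The plan is to use the full-rank hypothesis to pick $\mathcal{C}'$ as the complement of a maximal set of linearly independent columns of $H_{1\rightarrow 2}$, and then to check that the $m_1\times m_1$ system described in Definition~\ref{def:rc_cond} is nonsingular over $\mathrm{GF}(2)$ by exhibiting a block-triangular structure.

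First I would recall that $H_{1\rightarrow 2}$ has size $m_2\times m_1$ with $m_2<m_1$, so being full rank means its rank equals $m_2$. Since row rank equals column rank, there is a set $\mathcal{D}\subseteq\{1,\dots,m_1\}$ of $m_2$ column indices for which the corresponding columns of $H_{1\rightarrow 2}$ are linearly independent over $\mathrm{GF}(2)$; equivalently, the $m_2\times m_2$ submatrix $A$ of $H_{1\rightarrow 2}$ on these columns is invertible. I then set $\mathcal{C}' = \mathcal{C}\setminus\mathcal{D}$, which has cardinality $m_1-m_2$ as required, and write $B$ for the $m_2\times(m_1-m_2)$ submatrix of $H_{1\rightarrow 2}$ on the columns indexed by $\mathcal{C}'$.

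Next I would write out the system of $m_1$ equations in the $m_1$ unknowns $\underline{c}^{m_1}$ from Definition~\ref{def:rc_cond}: the first $m_2$ equations are $H_{1\rightarrow 2}\,\underline{c}^{m_1}=\underline{u}^{m_2}$ (the constraints indexed by $\mathcal{U}$), and the remaining $m_1-m_2$ equations are $c_j=v_j$ for $j\in\mathcal{C}'$, where $v_j$ is the value transmitted. Ordering the unknowns so that the coordinates in $\mathcal{D}$ come first and those in $\mathcal{C}'$ come second, the coefficient matrix of this system becomes
\begin{equation*}
M=\begin{bmatrix} A & B \\ 0 & I_{m_1-m_2}\end{bmatrix},
\end{equation*}
whose determinant over $\mathrm{GF}(2)$ equals $\det(A)\det(I_{m_1-m_2})=\det(A)\neq 0$. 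Hence $M$ is invertible, the system has a unique solution for every right-hand side, and in particular the receiver recovers $\underline{c}^{m_1}$ from $\underline{u}^{m_2}$ and the coordinates indexed by $\mathcal{C}'$; by Definition~\ref{def:rc_cond}, $(H_1,H_{1\rightarrow 2},\mathcal{C}')$ is then a rate-adaptive code. An equivalent way to close the argument, avoiding determinants, is to solve the first block explicitly as $\underline{c}_{\mathcal{D}}=A^{-1}\bigl(\underline{u}^{m_2}+B\,\underline{c}_{\mathcal{C}'}\bigr)$, which exhibits the unique solution directly.

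I do not expect a serious obstacle: the only points requiring care are the (standard) equality of row and column rank used to extract the invertible block $A$, and the bookkeeping that the equation rows coming from $\mathcal{C}'$ are genuinely standard basis vectors after the column permutation, so that $M$ is truly block-triangular over $\mathrm{GF}(2)$.
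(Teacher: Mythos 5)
Your proof is correct and self-contained. The paper itself does not reprove Proposition~\ref{prop:fullrank} (it is cited from the earlier conference versions), but your argument is the standard, clean one: extract an invertible $m_2\times m_2$ block $A$ from $H_{1\rightarrow 2}$ using full rank, put the complementary column indices in $\mathcal{C}'$, reorder to expose the block-triangular structure $\begin{bmatrix} A & B \\ 0 & I \end{bmatrix}$, and conclude invertibility. The explicit back-substitution $\underline{c}_{\mathcal{D}} = A^{-1}(\underline{u}^{m_2} + B\,\underline{c}_{\mathcal{C}'})$ is a nice touch since it also exhibits how the receiver actually reconstructs $\underline{c}^{m_1}$. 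One cosmetic remark: you write $\mathcal{C}' = \mathcal{C}\setminus\mathcal{D}$, but $\mathcal{C}$ is a set of check nodes while $\mathcal{D}$ is a set of indices; the intended meaning $\mathcal{C}' = \{C_j : j\notin\mathcal{D}\}$ is clear, but the notation should be made consistent. It is also worth observing that in the specific construction the paper actually uses (Section V.D), $H_{1\rightarrow 2}$ has exactly one nonzero entry per column, which makes the choice of $\mathcal{C}'$ and the back-substitution completely explicit; your proof is the more general statement that any full-rank $H_{1\rightarrow 2}$ admits such a $\mathcal{C}'$.
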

% \begin{proof}
%  The variables $U_1,\cdots, U_{m_2}$ define $m_2$ equations of $m_1$ variables $C_1,\cdots, C_{m_1}$. Consider $m_1 - m_2$ additional virtual equations $U_{m_2+1},\cdots, U_{m_1}$ such that for all $i \in \{ m_2+1,\cdots, m_1 \}$, $U_i$ equals one of the variables $C_1,\cdots, C_{m_1}$. Also define an extended matrix $\tilde{H}_{1\rightarrow 2}$ of size $m_1\times m_1$. The first $m_2$ rows of $\tilde{H}_{1\rightarrow 2}$ are given by $H_{1\rightarrow 2}$. The last $m_1-m_2$ rows will define to which variable of $\mathcal{C}$ each $U_i$ ($i>m_2$) is equal. 
%  Since $H_{1\rightarrow 2}$ is full rank, we can always identify $m_2$ linearly independent columns in $H_{1\rightarrow 2}$. For each of the $m_1-m_2$ remaining columns, it is always possible to put a value $1$ in one of the last $m_1 - m_2$ rows of  $\tilde{H}_{1\rightarrow 2}$ such that each of these rows contains only one non-zero value. This guarantees that $ \tilde{H}_{1\rightarrow 2}$ is full rank. The positions of the $m_1 - m_2$ remaining columns in fact defines a possible set $\mathcal{C'}$, which concludes the proof. 
% \end{proof}
 
 The above proposition shows that if $H_{1\rightarrow 2}$ is full rank, it is always possible to find a set $\mathcal{C'}$ that ensures that $H_1$ and $H_2$ are rate-adaptive. 
The decoding performance of $H_1$ does not depend on the choice of the set $\mathcal{C'}$, since at rate $R_1$, the decoder uses $H_1$ and at rate $R_2$, the decoder uses $H_2$. 
On the opposite, according to~\eqref{eq:defH2}, the decoding performance of the matrix $H_2$ heavily depends on the matrix $H_{1\rightarrow 2}$. 
In~\cite{mheich18WCNC}, the matrix $ H_{1\rightarrow 2}$ is constructed from an exhaustive search, which is hardly feasible when the codeword length increases (from $100$ bits). 
In~\cite{Ye18ICT}, a more efficient method is proposed to construct the intermediate matrix $H_{1\rightarrow 2}$ so as to avoid short cycles in $H_2$. However, the method of~\cite{Ye18ICT} does not optimize the theoretical threshold of the degree distribution of $H_2$, which also influences the code performance.   
In this paper, we propose a novel method based on protographs for the design of the intermediate matrix $H_{1\rightarrow 2}$. This novel method not only allows to optimize the threshold of the protograph of $H_2$, but also to reduce the amount of short cycles in $H_2$. 
% In the following, we propose a more efficient method that permits to avoid as much as possible short cycles in the matrix $H_2$.

\section{Intermediate matrix construction}\label{sec:int_matrix}
This section describes our novel method for the construction of the intermediate matrix $H_{1\rightarrow 2}$ introduced in Section~\ref{sec:ex_ra_const}.
The proposed construction seeks to minimize the protograph threshold at rate $R_2$, and also to reduce the amount of short cycles in the parity check matrix $H_2$.
% First, from a given protograph for the mother code, we describe how to choose the protographs with the best thresholds for the daughter codes.
% For this, we identify the set of admissible protographs for the daughter code (this set is constrained by the protograph of the mother code), and compute the thresholds of all the admissible protographs.  
% Second, we propose an algorithm to generate daughter codes that both satisfy the protograph constraints and lower the amount of cycles in the daughter codes.  
% We then generalize this construction to an arbitrary number of rates. 

\subsection{Protograph $\mathcal{S}_2$ of parity check matrix $H_2$}\label{sec:proto_H2}

In order to construct a good parity check matrix $H_{2}$ from the initial matrix $H_{1}$, we first want to select a protograph $\mathcal{S}_2$ with a good theoretical threshold. 
In this section, we consider the following notation. 
% Denote by $\mathcal{S}_1$ the protograph of size $S_{m_1} \times S_n$ of the mother matrix $H_1$, and by $\mathcal{S}_{1\rightarrow2}$ the protograph of size $S_{m_2} \times S_{m_1}$ of the intermediate matrix $H_{1\rightarrow2}$. 
Generally speaking, consider the protograph $\mathcal{S}_g$ of size $S_{m_g} \times S_{n_g}$ associated with the matrix $H_g$, where $g \in \{1,2,1\rightarrow 2\}$. 
As a particular case, note that $S_{m_{1\rightarrow 2}} =  S_{m_2}$ and $S_{n_{1\rightarrow 2}} =  S_{m_1}$.
For all $ (i,j) \in \{ 1,\cdots, S_{m_g} \} \times \{ 1,\cdots, S_{n_g} \}$, denote by $s_{i,j}^{(g)}$ the coefficient at the $i$-th row, $j$-th column of $\mathcal{S}_{g}$.
In the protograph $\mathcal{S}_g$, the CN types are denoted $A_1^{(g)}, \cdots,  A_{S_{m_g}}^{(g)}$ and the VN types are denoted $B_1^{(g)}, \cdots,  B_{S_{n_g}}^{(g)}$. 
In the parity check matrix $H_g$, the set of CNs of type $A_i^{(g)}$ is denoted $\mathcal{A}_i^{(g)}$ and the set of VNs of type $B_j^{(g)}$ is denoted $\mathcal{B}_j^{(g)}$. 
Finally, denote by $\underline{h}_k^{(g)}$ the $k$-th row of $H_g$, and denote by  $h_{\ell,k}^{(g)}$ the coefficient at the $\ell$-th row, $k$-th column of $H_{g}$.

Based on the above notation, the following proposition gives the relation between the three protographs $\mathcal{S}_1$, $\mathcal{S}_2$, and $\mathcal{S}_{1\rightarrow2}$.

\begin{prop}\label{prop:rel_proto}
Consider a matrix $H_1$ with protograph $\mathcal{S}_1$ of size $S_{m_1} \times S_n$, a matrix $H_{1\rightarrow 2}$ with protograph  $\mathcal{S}_{1\rightarrow2}$ of size $S_{m_2} \times S_{m_1}$, and a matrix $H_2 = H_{1\rightarrow 2} H_1$. 
Also consider the following two assumptions:
\begin{enumerate}
 \item Type structure: for all $j \in \{1,\cdots, S_{m_1} \}$, $\mathcal{B}_j^{(1\rightarrow 2)} = \mathcal{A}_j^{(1)}$.
 \item No VN elimination: For all $\ell \in \{1,\cdots, m_2\}$, denote by $\mathcal{N}_{\ell}^{(1\rightarrow 2)}$ the positions of the non-zero components in $\underline{h}_{\ell}^{(1\rightarrow 2)}$. Then, $\forall k_1,k_2 \in \mathcal{N}_{\ell}^{(1\rightarrow 2)}$ such that $k_1 \neq k_2$, and $\forall i \in \{1,\cdots,n\}$, $h_{k_1,i}^{(1)} \neq h_{k_2,i}^{(1)}$. 
\end{enumerate}
If these two assumptions are fulfilled, then the matrix
\begin{equation}\label{eq:relation_proto}
\mathcal{S}_{2}=\mathcal{S}_{1\rightarrow2} \mathcal{S}_{1}
\end{equation}
is of size $S_{m_2} \times S_n $ and it is a protograph of the matrix $H_2$.
The operation in~\eqref{eq:relation_proto} corresponds to standard matrix multiplication over the field of real numbers. 
\end{prop}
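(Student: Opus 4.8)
The plan is to show that the real-number product $\mathcal{S}_2 = \mathcal{S}_{1\rightarrow 2}\mathcal{S}_1$ correctly counts, for each CN type $A_\ell^{(2)}$ and each VN type $B_j^{(1)}$, the number of edges in the Tanner graph of $H_2$ between a fixed representative CN of type $A_\ell^{(2)}$ and the VNs of type $B_j^{(1)}$. First I would unwind the definition of a protograph: recalling that $H_2 = H_{1\rightarrow 2}H_1$ is computed over $\mathbb{F}_2$, I would write $h^{(2)}_{\ell,i} = \bigoplus_{k=1}^{m_1} h^{(1\rightarrow 2)}_{\ell,k}\, h^{(1)}_{k,i}$, and observe that the set of $H_1$-rows contributing to the $\ell$-th row of $H_2$ is exactly $\mathcal{N}^{(1\rightarrow 2)}_\ell$. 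The key point is that for $\mathcal{S}_{1\rightarrow 2}\mathcal{S}_1$ to give the \emph{right} integer counts, two kinds of collisions must be excluded: (i) cancellations when two rows $h^{(1)}_{k_1}$ and $h^{(1)}_{k_2}$ both have a $1$ in column $i$ (then the XOR kills that column), which is precisely ruled out by Assumption 2 (No VN elimination); and (ii) the possibility that CNs counted by a single protograph entry behave inhomogeneously, which is controlled by Assumption 1 (Type structure), ensuring that the VNs of $\mathcal{T}_2$ are organized into types matching the columns of $\mathcal{S}_1$ via the CN types of $\mathcal{T}_1$.

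The main argument would then proceed by fixing a row index $\ell \in \{1,\dots,m_2\}$ belonging to CN class $\mathcal{A}^{(2)}_a$ (so $\ell$ corresponds to protograph row $a$), and a column class $\mathcal{B}^{(1)}_j$. I want to count $\sum_{i \in \mathcal{B}^{(1)}_j} h^{(2)}_{\ell,i}$ as an integer. By Assumption 2, for each $i$ at most one $k \in \mathcal{N}^{(1\rightarrow 2)}_\ell$ satisfies $h^{(1)}_{k,i} = 1$; hence no XOR cancellation occurs and $h^{(2)}_{\ell,i} = \sum_{k \in \mathcal{N}^{(1\rightarrow 2)}_\ell} h^{(1)}_{k,i}$ as an \emph{integer} sum (each term $0$ or $1$, and at most one nonzero). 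Summing over $i \in \mathcal{B}^{(1)}_j$ gives $\sum_{k \in \mathcal{N}^{(1\rightarrow 2)}_\ell} \big(\sum_{i \in \mathcal{B}^{(1)}_j} h^{(1)}_{k,i}\big)$. Now each inner sum is the number of edges from CN $k$ of $\mathcal{T}_1$ to VNs of type $B^{(1)}_j$; by the protograph property of $\mathcal{S}_1$ together with Assumption 1, which identifies the type of CN $k$ (it lies in $\mathcal{A}^{(1)}_{c}$ for the unique $c$ with $k \in \mathcal{B}^{(1\rightarrow 2)}_c = \mathcal{A}^{(1)}_c$), this inner sum equals $s^{(1)}_{c,j}$ and is \emph{independent} of which particular $k$ of that type we chose. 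Therefore the outer sum regroups as $\sum_{c=1}^{S_{m_1}} (\#\{k \in \mathcal{N}^{(1\rightarrow 2)}_\ell : k \in \mathcal{A}^{(1)}_c\})\, s^{(1)}_{c,j} = \sum_{c=1}^{S_{m_1}} s^{(1\rightarrow 2)}_{a,c} s^{(1)}_{c,j} = (\mathcal{S}_{1\rightarrow 2}\mathcal{S}_1)_{a,j}$, where the middle equality uses that $\mathcal{S}_{1\rightarrow 2}$ is a protograph of $H_{1\rightarrow 2}$ with column types matching the CN types of $\mathcal{T}_1$ (Assumption 1 again). This quantity is independent of the representative $\ell$ within class $a$, which is exactly what is needed for $\mathcal{S}_2$ to be a valid protograph of $H_2$.

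To finish, I would note that the size claim is immediate: $\mathcal{S}_{1\rightarrow 2}$ is $S_{m_2}\times S_{m_1}$ and $\mathcal{S}_1$ is $S_{m_1}\times S_n$, so $\mathcal{S}_2$ is $S_{m_2}\times S_n$, and $S_{m_2}/S_n = (m_2/m_1)(m_1/n) = m_2/n = R_2$ as required. The remaining formal obligation is to confirm that ``being a protograph of $H_2$'' means precisely that $H_2$ arises from $\mathcal{S}_2$ by the copy-and-permute (lifting) operation with the given lift factor $Z$ (so $n = ZS_n$, $m_2 = ZS_{m_2}$), i.e.\ that the CN/VN-type-wise edge counts of $H_2$ match the entries of $\mathcal{S}_2$; the computation above delivers exactly this.

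The step I expect to be the main obstacle is establishing the homogeneity carefully, namely that the integer count $\sum_{i \in \mathcal{B}^{(1)}_j} h^{(2)}_{\ell,i}$ truly does not depend on the choice of representative $\ell$ within a CN type, and that the multiset of $\mathcal{T}_1$-CN types appearing among $\mathcal{N}^{(1\rightarrow 2)}_\ell$ is the same for all such $\ell$. This is where Assumptions 1 and 2 do the real work, and one must check that the type structure of $H_{1\rightarrow 2}$ (columns typed by $\mathcal{A}^{(1)}$, rows typed by $\mathcal{A}^{(2)}$) together with the absence of VN elimination are exactly the hypotheses that make the naive real-matrix product track the $\mathbb{F}_2$ product at the protograph level. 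I would also be mildly careful about degenerate cases such as an all-zero row of $H_{1\rightarrow 2}$ or zero entries in $\mathcal{S}_1$, but these are handled uniformly by the summation identities above.
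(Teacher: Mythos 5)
Your proposal follows essentially the same approach as the paper's proof: expand $H_2 = H_{1\to 2}H_1$ row-wise, use the no-VN-elimination assumption to replace the $\mathbb{F}_2$ sum by an integer sum, regroup the contributing rows of $H_1$ by CN type via Assumption 1, and recognize the result as the $(a,j)$ entry of the real matrix product $\mathcal{S}_{1\to 2}\mathcal{S}_1$. Your write-up is in fact a bit more explicit than the paper's at two points — making precise that the quantity is independent of the representative $\ell$ within a CN type, and verifying the size/rate bookkeeping — but the decomposition and the role played by each assumption are identical, so this is the same argument.
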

\begin{proof}
In this proof, for clarity, we denote by $\bigoplus$ the modulo two sums and by $\sum$ the standard sums over the field of real numbers.
With the above notation, relation~\eqref{eq:defH2} can be restated row-wise as
\begin{equation}\label{eq:Intermidiateh} 
 \underline{h}_{\ell}^{(2)}  = \bigoplus_{k=1}^{m_1} h_{\ell,k}^{(1\rightarrow 2)} \underline{h}_{k}^{(1)} 
  = \bigoplus_{j=1}^{S_{m_1}} \underset{\text{ s.t. } h_{\ell,k}^{(1\rightarrow 2)}\neq 0}{\bigoplus_{k \in \mathcal{B}_j^{(1\rightarrow 2)}}} \underline{h}_{k}^{(1)} .
\end{equation}
Relation~\eqref{eq:Intermidiateh} depends on index $\ell$ only through $h_{\ell,k}^{(1\rightarrow 2)}$. This implies that, in~\eqref{eq:Intermidiateh} the type combination is the same for every $\ell \in \mathcal{A}_i^{(1\rightarrow 2)}$. As a result, for all $i \in \{1,\cdots, S_{m_2}\}$, $\mathcal{A}_i^{(2)} = \mathcal{A}_i^{(1\rightarrow 2)} $.
In the same way, deriving relation~\eqref{eq:defH2} column-wise permits to show that $\forall j \in \{1,\cdots, S_n\}$, $\mathcal{B}_j^{(2)} =  \mathcal{B}_j^{(1)}$. 

Now consider $i \in \{1,\cdots, S_{m_2} \}$, $v \in \{1,\cdots, S_{n} \}$, and $\ell \in \mathcal{A}_i^{(2)}$. Then, from~\eqref{eq:Intermidiateh},
\begin{equation}
  s_{i,v}^{(2)}  = \sum_{u \in \mathcal{B}_v^{(2)}} h_{l,u}^{(2)} =  \sum_{u \in \mathcal{B}_v^{(2)}} \left( \bigoplus_{j=1}^{S_{m_1}} \underset{\text{ s.t. } h_{\ell,k}^{(1\rightarrow 2)}\neq 0}{\bigoplus_{k \in \mathcal{B}_j^{(1\rightarrow 2)}}} h_{k,u}^{(1)}  \right) .
\end{equation}
In the vector $\underline{h}_{k}^{(1)}$ with $k \in \mathcal{B}_j^{(1\rightarrow 2)}$, there are $s_{j,v}^{(1)}$ non-zero values over the components $h_{k,u}$ such that $u \in \mathcal{B}_v^{(2)}$.  
In addition, for $k \in \mathcal{B}_j^{(1\rightarrow 2)}$, there are $s_{i,j}^{(1\rightarrow 2)}$ non-zero values over the components $h_{\ell,k}^{(1\rightarrow 2)}$. 
As a result, and since there is not VN elimination,
\begin{equation}
 s_{i,v}^{(2)} = \sum_{j=1}^{S_{m_1}} s_{i,j}^{(1\rightarrow 2)} s_{j,v}^{(2)} ,
\end{equation}
which implies~\eqref{eq:relation_proto}.

\end{proof}
In Proposition~\ref{prop:rel_proto}, assumption 1) is required because various interleaving structures may be used to construct \emph{e.g.} a matrix $H_1$ from a given protograph $\mathcal{S}_1$. This assumption guarantees that the same interleaving structure is used for the CNs of $\mathcal{S}_1$ and the VNs of $\mathcal{S}_{1\rightarrow 2}$.
Further, assumption $2$ guarantees that relation~\eqref{eq:defH2} does not eliminate any VN from the parity check equations in $H_2$. This permits to preserve the code structure that will be characterized by protograph $\mathcal{S}_2$. 
Then, by comparing~\eqref{eq:defH2} and~\eqref{eq:relation_proto}, we observe that there is the same relation between the protographs $\mathcal{S}_{1}$, $\mathcal{S}_{2}$, and between the parity check matrices $H_1$, $H_2$.
Further, according to~\eqref{eq:relation_proto}, the problem of finding a good protograph $\mathcal{S}_2$ for $H_2$ can be reduced to finding the intermediate protograph $\mathcal{S}_{1\rightarrow2}$ that maximizes the threshold of $\mathcal{S}_{2}$.
% Figure~\ref{fig:exampleS12} shows an example of construction of the protograph $S_2$ from $S_1$ and $S_{1\rightarrow 2}$.

% \fig{exampleS12}{Construction of the daughter protograph $S_{2}$ from the mother protograph $S_{1}$ and the intermediate protograph $S_{1\rightarrow2}$, where $sm_{i}$ is the $i^{th}$ line of $S_{1}$. The first and second rows of $S_{1\rightarrow2}$ keep the first and second rows of $S_{1}$. The third row of $S_{1\rightarrow2}$ combines the third and forth rows of $S_{1}$.}{0.75}{h}

\subsection{Optimization of the intermediate protograph $\mathcal{S}_{1\rightarrow2}$}\label{sec:optim_int_proto}

The protograph $\mathcal{S}_{1\rightarrow2}$ of size $S_{m_2} \times S_{m_1}$ must be full rank in order to satisfy the rate-adaptive condition defined in Section~\ref{sec:ra_cond}.
However, even if $S_{m_{1}}$ and $S_{m_{2}}$ are small, there is a still a lot of possible protographs $\mathcal{S}_{1\rightarrow2}$. 
This is why, here, we impose that each row of $\mathcal{S}_{1\rightarrow2}$ has either $1$ or $2$ non-zero components, that each column has exactly $1$ non-zero component, and that all the non-zero components are equal to $1$. 
These constraints are equivalent to considering that each row of $\mathcal{S}_2$ is either equal to a row of $\mathcal{S}_1$ or equal to the sum of two rows of $\mathcal{S}_1$.
They limit the number of possible $\mathcal{S}_{1\rightarrow2}$ without being too restrictive. 
They will also make the intermediate matrix $H_{1\rightarrow 2}$ quite sparse, which will help limiting the amount of short cycles in the matrix $H_2$.
Finally, we observe that these constraints provide satisfactory rate-adaptive code constructions in our simulations. The design algorithms described in the remaining of the paper can also be easily generalized to other constraints on the intermediate protograph. 

For the optimization, we then generate all the possible intermediate protographs $\mathcal{S}_{1\rightarrow2}$ that satisfy the above two conditions ($\mathcal{S}_{1\rightarrow2}$ is full rank and each of its rows has either $1$ or $2$ non-zero components), and select the intermediate protograph that maximizes the threshold of the protograph $\mathcal{S}_{2}$ calculated from~\eqref{eq:relation_proto}.

The intermediate protograph $\mathcal{S}_{1\rightarrow2}$ defines the degree distribution of the intermediate matrix $H_{1\rightarrow 2}$.  
It also indicates the rows of $H_1$ that can be combined in order to construct the daughter matrix  $H_2$.
% For instance, the intermediate protograph shown in Figure~\ref{fig:exampleS12} indicates that lines of type $3$ in $H_2$ result from the combination of lines of types $3$ and $4$ of $H_1$.
We would like those rows to be combined in the best possible way in order to produce $H_2$. In particular, we would like to avoid both short circles and VN elimination during the construction of $H_2$.
In the following, we propose an algorithm that constructs $H_2$ from these conditions.

\subsection{Algorithm Proto-Circle: connections in $H_{1 \rightarrow 2}$}\label{sec:ra_construction}

% \textcolor{blue}{Add an algorithm, done. Algo 1}

\begin{algorithm}[h!]
 \caption{Proto-Circle: construction of the low-rate matrix $H_{2}$}
 \begin{algorithmic}\label{algo:duplication}
 \STATE{\textbf{Inputs}: $H_1$, $\mathcal{S}_1$, $\mathcal{S}_{1\rightarrow 2}$, $K$, $H_2 = \{ \phi \}$ }
 %\FOR{ \textcolor{red}{$n_H = 1$ to $N_H$} }
 \FOR{$i=1$ to $S_{m_2}$}
 \IF{$i$-th row of $\mathcal{S}_{1\rightarrow 2}$ has two non-zero components $s_{i,j_1}^{(1\rightarrow 2)}$, $s_{i,j_2}^{(1\rightarrow 2)}$ }
 \FOR{$\ell=1$ to $m_1/S_{m_1}$}
 \STATE{Pick $u$ at random in $\mathcal{A}_{j_1}^{(1)}$ and $v_1,\cdots,v_K$ at random in $\mathcal{A}_{j_2}^{(1)}$ such that $\forall k \in \{1,\cdots, K\}$, $\forall w \in \{1,\cdots, m_1\}$, $h_{v_1,w}^{(1)} . h_{v_2,w}^{(1)} = 0$ }
 \STATE{For all $k\in \{1,\cdots,K\}$, count the number $N_{4,k}$ of length-$4$ cycles in $H_2 \cup \{ \underline{h}_{u}^{(1)} + \underline{h}_{v_k}^{(1)} \}$}
 \STATE{For the index $k^{\star}$ that minimizes $N_{4,k}$, do $H_2 \leftarrow H_2 \cup \{ \underline{h}_{u}^{(1)} + \underline{h}_{v_{k^{\star}}}^{(1)}\} $}
 \STATE{Remove $u$ from $\mathcal{A}_{j_1}^{(1)}$ and $v_{k^{\star}}$ from $\mathcal{A}_{j_2}^{(1)}$}
 \ENDFOR
 \ELSE
 \FOR{$\ell=1$ to $m_1/S_{m_1}$}
 \STATE{Pick $u$ at random in $\mathcal{A}_{j_1}^{(1)}$ ($s_{i,j_1}^{(1\rightarrow 2)} \neq 0$) and do $H_2 \leftarrow H_2 \cup \{ \underline{h}_{u}^{(1)} \}$, remove $u$ from $\mathcal{A}_{j_1}^{(1)}$}
 \ENDFOR
 \ENDIF
  \ENDFOR
 \STATE{\textbf{outputs}: $H_2$, $N_4$ (number of length-4 cycles in $H_2$)}
 \end{algorithmic}
\end{algorithm}

In Section~\ref{sec:optim_int_proto}, we selected the intermediate protograph $\mathcal{S}_{1\rightarrow2}$ that gives the protograph $\mathcal{S}_{2}$ with highest threshold.  
We now explain how to construct $H_{1\rightarrow 2}$ in order to follow the degree distribution defined by protograph $\mathcal{S}_{1\rightarrow2}$, but also to limit the amount of short cycles in $H_2$ and to avoid VN elimination.
The algorithm Proto-Circle we propose is described in Algorithm~\ref{algo:duplication}. It constructs one row of $H_2$ at a time by combining rows of $H_1$, which can be regarded as defining the coefficients of the intermediate matrix $H_{1\rightarrow2}$. For each new row of $H_2$, we want to limit the number of short cycles that are added to the parity check matrix $H_2$. 

According to section~\ref{sec:optim_int_proto}, each row of the protograph $\mathcal{S}_{1\rightarrow 2}$ has either $1$ or $2$ non-zero components.
The rows of $\mathcal{S}_{1\rightarrow 2}$ that have $2$ non-zero components indicate that two rows of $H_1$ of some given types should be combined in order to obtain one row of $H_2$. 
More formally, assume that the $i$-th row of $\mathcal{S}_{1\rightarrow 2}$ is such that $s_{i,j_1}^{({1\rightarrow 2})} = 1$ and $s_{i,j_2}^{({1\rightarrow 2})} = 1$ $(j_1\neq j_2)$. 
This means that two rows of $H_1$ of types $A_{j_1}^{(1)}$ and $A_{j_2}^{(1)}$ should be combined in order to obtain one row of $H_2$ of type $A_i^{(2)}$.
For this, we select at random one row $\underline{h}_u^{(1)}$ of $H_1$ of type $A_{j_1}^{(1)}$ and $K$ rows $\underline{h}_{v_1}^{(1)}, \cdots \underline{h}_{v_K}^{(1)}$ of type $A_{j_2}^{(1)}$ such that $\forall k \in \{1,\cdots, K\}$, $\forall w \in \{1,\cdots, m_1\}$, $h_{v_1,w}^{(1)} . h_{v_2,w}^{(1)} = 0$ (binary AND operation).
%\textcolor{blue}{such that none of the $K$ selected rows has  common VNs with the selected row of type $A_{i}$.} 
This condition avoids VN elimination. 
The algorithm counts the number $N_{4,k}$ of length-$4$ cycles that would be added if a new row $\underline{h}_u^{(1)} +  \underline{h}_{v_k}^{(1)}$ was added to $H_2$. 
The number of length-$4$ cycles in $H_2$ is computed with the algorithm proposed in~\cite{mao2001heuristic}. Note that the algorithm can be easily modified to also consider larger cycles.
% by combining the row of type $A_i$ with each of the $K$ rows of type $A_j$. 
The algorithm then chooses the row combination that adds least cycles in $H_2$.

Once all the lines of types $A_{j_1}^{(1)}$ and $A_{j_2}^{(1)}$ have been combined, the algorithm passes to the next row of $\mathcal{S}_{1\rightarrow 2}$ with two non-zero components and repeats the same process. It then processes the rows of $\mathcal{S}_{1\rightarrow 2}$ with one non-zero component. 
For instance, assume that row $i'$ of $\mathcal{S}_{1\rightarrow 2}$ has one non-zero component $s_{i',j'}^{(1\rightarrow 2)} = 1$. Then, all the lines of $H_1$ of type $A_{j'}^{(1)}$ are placed into $H_2$. The placement order does not have any influence on the amount of cycles in the matrix $H_2$. 

After constructing all the rows of $H_{2}$, the algorithm counts the total number of length-$4$ cycles in the newly created $H_{2}$. 
At the end, repeating the algorithm Proto-Circle several times allows us to choose the matrix $H_{2}$ with least short cycles.

\color{black}
\subsection{Construction of the set $\mathcal{C}'$}
% The intermediate protograph $\mathcal{S}_{1 \rightarrow 2}$ selected from Section~\ref{sec:optim_proto} as well as the code construction proposed in Section~\ref{sec:ra_construction} ensure that the matrix $H_{1\rightarrow 2}$ is full rank. 
The intermediate matrix $H_{1\rightarrow2}$ follows the structure of the protograph $\mathcal{S}_{1\rightarrow 2}$. As a result, according to Section~\ref{sec:optim_int_proto}, each of its lines has either $1$ or $2$ non-zero components. 
Further, the algorithm Proto-Circle introduced in Section~\ref{sec:ra_construction} imposes that each row of $H_1$ participates to exactly one combination for the constructions of the rows of $H_1$. 
These two conditions guarantee that $H_{1\rightarrow 2}$ is full-rank so that the rate-adaptive condition presented in Section~\ref{sec:ra_cond} is satisfied.
However, in order to completely define the rate-adaptive code $(H_1,H_{1\rightarrow 2},\mathcal{C}')$, we need to define a set $\mathcal{C}'$ of symbols of $\mathcal{C}$ that will be sent together with the set $\mathcal{U}$ in order to obtain the rate $R_1$.

The set $\mathcal{C}'$ will serve to solve a system of $m_1$ equations $\mathcal{U}$ with $m_1$ unknowns $\mathcal{C} \setminus \mathcal{C}'$. For each syndrome symbol $u_i \in \mathcal{U}$ of degree $d_k$ in $H_{1\rightarrow 2}$, we hence decide to put $d_k-1$ of the $d_k$ CNs connected to $u_i$ into $\mathcal{C}'$. For example, if $u_1 = c_1 \oplus c_2 \oplus c_3$, $c_1$ and $c_2$ may be placed into $\mathcal{C}'$. 
This strategy guarantees that it is always possible to reconstruct the set $\mathcal{C}$ from $\mathcal{U}$ and $\mathcal{C}'$. In the above example, it indeed suffices to recover $c_3$ as $c_3 = u_1 \oplus c_1 \oplus c_2$.

We now count the number of symbols $c_i$ that are placed into $\mathcal{C}'$ with this strategy.
Since each line of  $H_{1\rightarrow2}$ has either $1$ or $2$ non-zero components, we have $d_k=1$ or $d_k=2$. Denote by $\alpha$ the proportion of values $u_k$ of degree $1$. We have the following relation between $m_1,m_2$ and $\alpha$: $$m_1 = \alpha m_2 + 2(1-\alpha)m_2.$$ This gives that $\alpha = 2-\frac{m_1}{m_2}$. 
Further, according to the code construction proposed in Section~\ref{sec:ra_construction}, each $c_i$ participates to exactly one equation $u_j$.
As a result, in the above strategy, the set $\mathcal{C'}$ is composed by $(1-\alpha)m_2 = m_1-m_2$ different values $c_i$, which is exactly what is required by the rate-adaptive construction.

\section{Generalization to several rates}\label{sec:several_rates}
The above method constructs the matrix $H_2$ of rate $R_2 < R_1$ from the matrix $H_1$. 
In order to obtain lower rates $R_T < R_{T-1} < \cdots < R_2 < R_1$, we need to construct the successive matrices $H_t$, $t \in \{2,\cdots, T\}$. 
As initially proposed in~\cite{mheich18WCNC}, the matrices $H_t$ can be constructed recursively from intermediate matrices  $H_{t-1\rightarrow t}$ such that $H_t = H_{t-1\rightarrow t} H_{t-1}$. 
The intermediate matrices $H_{t-1\rightarrow t}$ are constructed by from the method described in Section~\ref{sec:int_matrix}.

However, with the method of Section~\ref{sec:int_matrix}, the rate values $R_2,\cdots, R_T$ are constrained by the size of the initial protograph $\mathcal{S}_1$. 
For a protograph $\mathcal{S}_1$ of size $S_{m_1} \times S_{n}$, the rate granularity is given by \begin{equation} r_g = \frac{R_1}{S_{m_1}}.\end{equation}  
For instance, if $R_1=1/2$ and $\mathcal{S}_1$ is of size $ 4 \times 8$, only rates $R_2 = 3/8$, $R_3 = 1/4$, $R_4 = 1/8$ can be achieved. 
This is why, in this section, we propose two alternatives methods that allow to decrease the rate granularity $r_g$.

\subsection{Protograph extension}\label{sec:proto_ext}
The first method called ``protograph extension'' consists of lifting the mother protograph $\mathcal{S}_1$ by a factor $Z_e$, in the same way as for producing a parity check matrix from a given protograph (see Section~\ref{subsec:code_construction}). This extension produces a protograph $\mathcal{S}_1'$ of size $Z_eS_{m_1} \times Z_e S_{n}$. 
For instance, the protograph 
\begin{equation}\label{eq:petit_proto} \mathcal{S}_1 = \begin{bmatrix}
                                        1 & 2 & 1 & 3 \\
                                        1 & 0 & 2 & 5
                                       \end{bmatrix}
 \end{equation} can be extended as 
 \begin{equation}\label{eq:grand_proto}
 \mathcal{S}_1' = \left[\begin{array}{cccccccc}
1 & 1 & 1 & 2 & 0 & 1 & 0 & 1\\
0 & 1 & 0 & 1 & 1 & 1 & 1 & 2\\
1 & 0 & 1 & 4 & 0 & 0 & 1 & 1\\
0 & 0 & 1 & 1 & 1 & 0 & 1 & 4
\end{array}\right] .
\end{equation}

% It is easy to show that the extended protograph $\mathcal{S}_1'$ has the same theoretical threshold as the initial protograph $\mathcal{S}_1$. 
The protograph $\mathcal{S}_1$ permits to generate an ensemble $\mathcal{H}_1$ of parity check matrices with asymptotic codeword length. According to~\cite[Theorem 2]{richardson2001design}, all the asymptotic parity check matrices in $\mathcal{H}_1$ have the same decoding performance given by the threshold of $\mathcal{S}_1$. 
The extended protograph $\mathcal{S}_1'$ generates a code ensemble $\mathcal{H}_1' \subseteq \mathcal{H}_1$. As a result, the asymptotic matrices in $\mathcal{H}_1'$ have the same decoding performance as the matrices in $\mathcal{H}_1$, and $\mathcal{S}_1$ and $\mathcal{S}_1'$ have the same theoretical threshold.

The above protograph extension allows to consider more rates, since the rate granularity $r_g'$ of $\mathcal{S}_1'$ is given by $r_g' = r_g/Z_e \leq r_g $. 
% This extension hence constitutes a first solution to decrease the rate granularity, especially in the case where we are given a very small initial protograph $\mathcal{S}_1$. 
However, it is not desirable neither to end up with an extended protograph $\mathcal{S}_1'$ of large size, \emph{e.g.} in the order of magnitude of $m_1$. 
Indeed, in this case, the number of possibilities for intermediate protographs $\mathcal{S}_{t-1\rightarrow t}$ would also become very large.  
In addition, it becomes computationally difficult to compute the theoretical thresholds for large protographs. 
As a result, if the size of $\mathcal{S}_1'$ is large, it will be very difficult to optimize the successive protographs $\mathcal{S}_t$ according to the method described in Section~\ref{sec:optim_int_proto}. 
This is why we now we propose a second method that allows to push further the rate granularity improvement. 

\subsection{Anchor rates}\label{subsec:anchor_rates}
In this second method, consider a protograph $\mathcal{S}_1$ of size $S_{m_1} \times S_{n}$.
As a first step, we do the protograph optimization of Section~\ref{sec:optim_int_proto} for all the possible rates
\begin{equation}\label{eq:succ_rates}
 R_t = R_1 - \frac{(t-1)R_1}{S_{m_1}},
\end{equation}
where $t \in \{1,\cdots, S_{m_1}\}$, and $R_{t-1}-R_t = R_1/S_{m_1}$.
This produces a sequence of protographs $\mathcal{S}_t$, and the rates $R_t$ are called the anchor rates. %the parity check matrices $H_t$ must follow the protograph structures $\mathcal{S}_t$. 
We now want to construct all the possible intermediate rates between any $R_{t-1}$ and $R_{t}$, with a rate granularity $r_g = R_1/m_1$. 

According to Section~\ref{sec:optim_int_proto}, the rows of the intermediate protographs $\mathcal{S}_{t-1 \rightarrow t}$ have either one or two non-zero components.
In addition, in order to obtain all the rates $R_t$ defined in~\eqref{eq:succ_rates}, exactly one row of $\mathcal{S}_{t-1 \rightarrow t}$ has two non-zero components. 
This is why, in order to obtain a rate $R_{t-1} - \frac{1}{m_1}$, we propose to combine two rows of the corresponding type in $H_{t-1}$. The resulting matrix contains the considered row combination, as well as all the non-combined rows of $H_{t-1}$. As in the algorithm Proto-Circle described in Section~\ref{sec:ra_construction}, we choose the row combination that minimizes the amount of short cycles that will be added in the resulting matrix.  
Applying this process recursively allows to obtain all rates $R_{t-1} - kR_1/{m_1}$, with $k \in \{1,\cdots, m_1/S_{m_1}\}$, and $m_1/S_{m_1}=Z_1$, where $Z_1$ is the lifting factor.
This approach also guarantees that at rate $R_t$, the resulting matrix follows the structure of protograph $\mathcal{S}_t$.

The anchor rates method allows to obtain a rate granularity $r_g = R_1/m_1$. In the simulation section, we combine both approaches (protograph extension and anchor rates) in order to obtain an incremental code construction that permits to handle a wide range of statistical relations between the source and the side information.
 
% %Simulation
\section{Simulation Results}\label{sec:simu}
\fig{proto_simu_n500}{BER performance of code $\mathcal{C}_1$ with dimension $248 \times 496$ using proposed construction compared with LDPCA}{0.6}{t}
\fig{good64}{BER performance of code $\mathcal{C}_2$ with dimension $256 \times 512$ using proposed construction compared with LDPCA}{0.6}{t}
\fig{good128}{BER performance of code $\mathcal{C}_3$ with dimension $512 \times 1024$ using proposed construction compared with LDPCA}{0.6}{t}

\begin{table}[h]
\begin{center}
 \begin{tabular}{|l|l|l|}
 \hline
 Rate & LDPCA & Our method \\
 \hline
 $R=3/8$ &  453 & 455\\
 \hline
  $R=1/4$ & 1216 & 737\\
 \hline
  $R=1/8$  & 5361 & 3477\\
  \hline
 \end{tabular}
 \caption{Number of length-4 cycles for code $\mathcal{C}_1$  } \label{fig:N4}
 \end{center}
\end{table}

This section evaluates from Monte Carlo simulations the performance of the proposed rate-adaptive construction. 
We assume a BSC of parameter $p$ and we consider three binary LDPC codes $\mathcal{C}_1$, $\mathcal{C}_2$, $\mathcal{C}_3$ constructed from protographs. These codes are set as mother codes for the initial rate $R = 1/2$. The algorithm introduced in Section~\ref{sec:ra_construction} then produces the corresponding daughter codes for lower rates $3/8$, $1/4$, $1/8$. In the following, we compare the performance of the obtained rate-adaptive codes with LDPCA.

The first code $\mathcal{C}_1$  is of size 248x496. In order to construct $\mathcal{C}_1$, we first obtained the protograph $\mathcal{S}_1$ of size $2\times 4$ in~\eqref{eq:petit_proto}
from the Differential Evolution optimization method described in Section~\ref{subsec:code_construction}.
Differential Evolution was applied by considering $V=60$ elements in the population. This follows~\cite{542711} which suggests to choose $5D < V < 10D$, where in our case, $D=S_nS_m = 8$. In addition, the number of iterations was set as $L=100$, and the maximum degree was set as $d_{\max} = 10$. 
The theoretical threshold of $\mathcal{S}_1$ is equal to $p = 0.094$, which is very close to the maximum value $p=0.11$ that can be considered at rate $1/2$. 
Protograph $\mathcal{S}_1$ was then extended to the protograph $\mathcal{S}_1'$ of size $4\times 8$ in~\eqref{eq:grand_proto} according to the method described in Section~\ref{sec:proto_ext}. 

The parity check matrix of $\mathcal{C}_1$ was constructed from the protograph $\mathcal{S}_{1}'$ by the PEG algorithm~\cite{1377521}. 
We then applied our construction method introduced in Section~\ref{sec:int_matrix} in order to obtain lower rates $3/8$, $1/4$ and $1/8$.
For this, we first needed to decide which rows of the protograph $\mathcal{S}_{opt1}$ should be combined (see Section~\ref{sec:optim_int_proto}) by checking the thresholds of all the possible combinations using Density Evolution.
From Density Evolution, we chose row combinations $A_1^{(1)} + A_3^{(1)}$ for rate $3/8$ and $A_1^{(1)} + A_3^{(1)}, A_2^{(1)} + A_4^{(1)}$ for rate $1/4$,  where the $A_i^{(1)}, i=1,2,\cdots,S_m$, denote the rows of $\mathcal{S}_{1}'$.
From the selected row combinations, we then constructed the corresponding matrices of rate $3/8$, $1/4$, $1/8$ from the algorithm Proto-Circle described in Section~\ref{sec:ra_construction}. This algorithm was applied with $K=20$ and repeated $10$ times in order to choose the low-rate matrices with the least short cycles.

Figure~\ref{fig:proto_simu_n500} shows the Bit Error Rate (BER) performance with respect to the BSC parameter $p$ for the four considered rates for $\mathcal{C}_1$.  
We observe that our code construction performs better than LDPCA at all the considered rates. 
Table~\ref{fig:N4} indeed shows that there are less length-$4$ cycles at rates $1/4$ and $1/8$ in our construction than in the LDPCA matrices.

% \subsubsection{$\mathcal{C}_2$}
The second code $\mathcal{C}_2$ is of size $256\times 512$ and it was generated from another protograph
\begin{equation}
\mathcal{S}_{opt2} = \left[\begin{array}{cccccccc}\label{mat:protographGood64}
2 & 1 & 1 & 1 & 0 & 1 & 1 & 0 \\
1 & 2 & 1 & 1 & 1 & 0 & 1 & 1\\
1 & 1 & 2 & 1 & 1 & 1 & 0 & 1\\
1 & 1 & 1 & 2 & 1 & 1 & 1 & 0
\end{array}\right]
\end{equation}
obtained from Differential Evolution and protograph extension.  
The codes of lower rates $3/8$, $1/4$, and $1/8$ were constructed by following the same steps as for  $\mathcal{C}_1$, according to the construction of Section~\ref{sec:int_matrix}.
The BER performance of these codes are shown in Figure~\ref{fig:good64} and compared to LDPCA.
For this case as well, our construction shows better performance than LDPCA. 
Finally, the code $\mathcal{C}_3$ is of size $512\times 1024$ and it was generated from the same protograph $\mathcal{S}_{opt2}$ as $\mathcal{C}_2$. Figure~\ref{fig:good128} shows that for $\mathcal{C}_3$ as well, our algorithm perform better than LDPCA at all the considered rates, with a larger code size.

\fig{res_rates}{Required rate $R$ with respect to $H(p)$ for LDPCA and for our method, for codes $\mathcal{C}_2$ and $\mathcal{C}_3$}{0.6}{t}

The curves of Figures~\ref{fig:proto_simu_n500},~\ref{fig:good64},~\ref{fig:good128}, considered the code performance for the anchor rates given in Section~\ref{subsec:anchor_rates}. 
We then applied the method described in Section~\ref{subsec:anchor_rates} to codes $\mathcal{C}_2$ and $\mathcal{C}_3$ in order to obtain rate granularities of $R_1/m_1= 9.8 \times 10^{-4}$ for $\mathcal{C}_2$ and $R_1/m_1= 4.9 \times 10^{-4}$ for $\mathcal{C}_3$, rather than $R_1/{S_{m_1}} = 0.125$.
For this, we considered different values of $p$, and for every considered value, we generated $1000$ couples $(\underline{x}^n,\underline{y}^n)$ from a BSC or parameter $p$.
For every generated couple, we found the minimum rate that permits to decode $\underline{x}^n$ from $\underline{y}^n$ without any error. 
The same kind of analysis was performed in~\cite{varodayan2006rate} and~\cite{cen2009design}, with different criterion to measure the rate needed for a given couple $(\underline{x}^n,\underline{y}^n)$.
In~\cite{varodayan2006rate}, this rate was determined as the minimum rate such that the decoded codeword $\hat{\underline{x}}^n$ verifies $H^T \hat{\underline{x}}^n = \underline{c}^m$, see~\eqref{eq:symdrom_comp}. However, this criterion does not necessarily means that the codeword was correctly decoded ($\hat{\underline{x}}^n$ can be different from $\underline{x}^n$), and this is why we do not consider it here. 
In~\cite{cen2009design}, the required rate was determined as the minimum rate that gives a BER lower than $10^{-6}$. This is equivalent to our criterion, since one uncorrectly decoded bit gives a BER of $2.0 \times 10^{-3}$ for $\mathcal{C}_2$, and of $1.0 \times 10^{-3} $ for $\mathcal{C}_3$.

At the end, Figure~\ref{fig:res_rates} represents the average rates needed for the considered values of $p$ with respect to $H(p)$. 
We first observe that our method shows a loss compared to the optimal rate $H(p)$. This rate loss is expected since we consider relatively short codeword length $512$ for $\mathcal{C}_2$ and $1024$ for $\mathcal{C}_3$.
In addition, for the same codes $\mathcal{C}_2$ and $\mathcal{C}_3$, LDPCA shows a much more important rate loss compared to our method, which was also expected from the results of Figures~\ref{fig:good64} and~\ref{fig:good128}. 
This shows that our construction combined with the anchor rates method is valid and outperforms LDPCA at all the considered values of $p$.

% % Conclusion
\section{Conclusion}
This paper introduced a novel rate-adaptive construction based on LDPC codes for the Slepian-Wolf source coding problem. 
The introduced construction is based on an optimization of the protographs of the incremental codes constructed at different rates.
It not only allows to optimize the code thresholds at these rates, but also to reduce the amount of short cycles in the obtained parity check matrices. 
The proposed method shows improved performance compared to the standard LDPCA for all the considered codes and at all the considered rates. 
This method may be easily generalized to non-binary LDPC codes. 

\section{Acknowledgement}
This work has received a French government support granted to the Cominlabs excellence laboratory and managed by the National Research Agency in the “Investing for the Future” program under reference ANR-10-LABX-07-01.

\bibliographystyle{IEEEtran}
\bibliography{sample}

\end{document}